\newtheorem{thm}{Theorem}
 \DeclareRobustCommand*{\IEEEauthorrefmark}[1]{%
 	\raisebox{0pt}[0pt][0pt]{\textsuperscript{\footnotesize\ensuremath{#1}}}}
\newcommand{\ma}{\bm}
\newcommand{\ve}{\bm}
\newcommand{\field}[1]{\mathbb{#1}}
\newcommand{\NN}{{\field{N}}} 
\newcommand{\RN}{{\field{R}}^{+}}
\newcommand{\indication}[1]{\mathds{1}_{\{#1\}}}
\newcommand{\operator}[1]{\mathrm{#1}} 
\newcommand{\csi}{\operator{CSI}}
\acrodef{3GPP}{3rd generation partnership project}
\acrodef{5G}{fifth generation}
 \acrodef{ABS}{almost blank subframe}
    \acrodef{BS}{base station}
    \acrodef{CDF}{cumulative distribution function}
    \acrodef{CSI}{channel state information}
    \acrodef{CQI}{channel quality indicator}
    \acrodef{CAST}{channel allocation with scalable TTI}
\acrodef{DL}{downlink}
    \acrodef{DUDe}{downlink and uplink decoupling}
\acrodef{eICIC}{enhanced intercell interference coordination}
\acrodef{ESD}{energy spectral density}
\acrodef{FDD}{frequency division duplex}
    \acrodef{FDMA}{frequency division multiple access}
   \acrodef{GP}{Gaussian process}
    \acrodef{GPS}{global positioning system}
\acrodef{HetNet}{heterogeneous network}
    \acrodef{ICI}{inter-cell interference}
		\acrodef{IMI}{inter-mode interference}
\acrodef{LTE}{long term evolution}
\acrodef{MAC}{media access control}
\acrodef{MRU}{minimum resource unit}
\acrodef{MBB}{mobile broadband}
\acrodef{MTC}{machine type communications}
\acrodef{MCC}{mission critical communications}
\acrodef{MMC}{massive machine type communications}
   \acrodef{OFDM}{orthogonal frequency division multiplexing}
    \acrodef{PDF}{probability density function}
    \acrodef{PHY}{physical layer}
		\acrodef{PSD}{power spectral density}
		\acrodef{PP}{partition problem}
    \acrodef{PRB}{physical resource block}
   \acrodef{QoE}{quality of experience}
    \acrodef{QoS}{quality of service}
    \acrodef{RAN}{radio access network}
		\acrodef{RBS}{removal of bottleneck services}
		\acrodef{RMDI}{resource muting for dominant interferer}
    \acrodef{RRM}{radio resource management}
		\acrodef{RU}{resource unit}
		\acrodef{RX}{receiver}
 \acrodef{STCA}{scalable-TTI enabled channel allocation}
    \acrodef{SDN}{software defined network}
    \acrodef{SNR}{signal-to-noise ratio}
    \acrodef{SINR}{signal-to-interference-plus-noise ratio}
\acrodef{SIR}{signal-to-interference ratio}
\acrodef{SIF}{standard interference function}
    \acrodef{SVM}{support vector machine}
    \acrodef{TCP}{transmission control protocol}
		\acrodef{TDD}{time division duplex}
    \acrodef{TDMA}{time division multiple access}
		\acrodef{TTI}{transmission time interval}
		\acrodef{TX}{transmitter}
		\acrodef{UE}{user equipment}
		\acrodef{UL}{uplink}
    \acrodef{WLAN}{wireless local area network}
\begin{document}

\title{
	 An Examination of the Benefits of Scalable TTI for Heterogeneous Traffic Management in 5G Networks
}

\author{
	 	\IEEEauthorblockN{Emmanouil Fountoulakis\IEEEauthorrefmark{1},
		Nikolaos Pappas\IEEEauthorrefmark{1},
		Qi Liao\IEEEauthorrefmark{2},
	    Vinay Suryaprakash\IEEEauthorrefmark{2},
        Di Yuan\IEEEauthorrefmark{1}}\\
	\IEEEauthorblockA{\IEEEauthorrefmark{1} Department of Science and Technology, Link{\"o}ping University, Sweden}\\
	\IEEEauthorblockA{\IEEEauthorrefmark{2} Nokia Bell Labs, Stuttgart, Germany}\\
	E--mails: \{emmanouil.fountoulakis, nikolaos.pappas, di.yuan\}@liu.se\\
      	   \centering \{qi.liao, vinay.suryaprakash\}@nokia-bell-labs.com
             }
             	\maketitle
   
	\begin{abstract}
	 \textbf{The rapid growth in the number and variety of connected devices requires 5G wireless systems to cope with a very heterogeneous traffic mix. As a consequence, the use of a fixed transmission time interval  (TTI) during transmission is not necessarily the most efficacious method when  heterogeneous traffic types need to be simultaneously serviced.
	 This work analyzes  the benefits of scheduling based on exploiting scalable TTI, where the channel assignment and the TTI duration are adapted to the deadlines and requirements of different services. We formulate an optimization problem by taking individual service requirements into consideration. We then prove that the optimization problem is NP-hard and provide a heuristic algorithm, which provides an effective solution to the problem. Numerical results show that our proposed algorithm is capable of finding near-optimal solutions to meet the latency requirements of mission critical communication services, while providing a good throughput performance  for mobile broadband services.}
	\end{abstract} 
    
    \vspace*{2mm}
    \begin{IEEEkeywords}  
     	       \textbf{5G, scalable TTI, deadline-constrained traffic, low latency, channel allocation, service-centric scheduler}
    \end{IEEEkeywords}
	
\section{Introduction}

The statement, ``Future wireless access will extend beyond people, to support connectivity for anything that may benefit from being connected.'', by the authors of \cite{EricssonReport} has far reaching implications. This entails that a  variety of  new autonomous devices, such as drones, sensors, etc., will communicate using the same network  that simultaneously has to serve  conventional mobile broadband (MBB) services. 
Thus,  next generation wireless communications systems will be characterized by their service requirement heterogeneity \cite{5Gwhitepaper}. 
A characteristic  example of  services, which have  requirements vastly different from   MBB services, are those that fall under the category of machine type  communications (MTC) \cite{FlexiblePedersen}. Two subcategories of MTC services are the mission critical communications (MCC) and  the massive  machine type communications (MMC).  MCC  services are characterized by small packets and  require ultra low latency ($\leq1$ms, \cite{EricssonReport}) and high reliability \cite{durisi2016toward}. On the other hand, MMC envisions tens of billions of connected devices \cite{EricssonReport}. Therefore, it is not far-fetched to assume that the use of a fixed TTI length for catering to such a diverse set of services could be suboptimal. For traffic types in which the ratio between the size of signaling and data is greater than or equal to  $1$, fixed TTI leads to a significant wastage of resources and -- as a result -- inefficient communications. The promise of scalable TTI as a potential solution was demonstrated in\cite{Qi}, where the TTI length could be scaled according to the traffic type.

To support a mix of  services with heterogeneous  requirements, in \cite{FlexiblePedersen} and \cite{FlexiblePedersen1}   the authors propose a flexible frame structure in frequency division dublex (FDD) networks. In these works, the delay constraints are reverse engineered based on the channel state information and the delay budgets. Along similar lines, the authors in \cite{levanen2014mm} apply the variable frame structure in the context of millimeter wave communications. However, these works aim to prioritize active services with strict latency requirements, while sacrificing the throughput of  mobile broadband users.  In a recent work \cite{Qi}, scalable TTI lengths are introduced in dynamic  time division duplex (TDD) mode in order to consider the requirements of each individual service and provide a good trade-off between heterogeneous performance metrics (with respect to their corresponding traffic demands and latency requirements). Moreover, the dynamic TDD scheme offers greater flexibility than the FDD scheme, in terms of adaptability to an asymmetry in  UL and DL traffic.
However, none of the works mentioned above jointly considers dynamic TTI length adaptation and channel allocation. In addition to  scheduling flexibility in the time domain, jointly considering scalable TTI and channel allocation provides a more flexible frame structure, which is better at exploiting channel diversity and improving spectral efficiency.

In this paper, we aim to develop a scheduling approach that strives to fulfill the (service) deadlines and requirements of different types of services by scaling the length of the TTI to be used. 
To this end, we formulate an optimization problem whose solution provides the appropriate TTI length and the channel allocation for each service.
We then prove that the optimization problem formulated is NP-hard. Therefore, in order to have a scheduler that works in polynomial time, we propose a greedy algorithm that finds an approximate  solution to the optimization problem. 
Numerical results show that the formulated optimization problem tries to cater to all MCC services within their latency requirements, while providing a higher throughput for MBB services in comparison to the other methods commonly considered. They also indicate that the improvement in performance provided by our formulation over the shortest deadline first scheduler (SDFS) increases as the number of active MCC services increases.
 
\section{System Model} \label{sec:model}

We consider a single cell of an FDD network   in  downlink mode \footnote{In this work, we assume that the downlink resources are always available since we consider an FDD system. However, the same formulation can also be applied to a TDD system, depending on whether the carriers are configured in uplink or downlink mode  during a given time period.}.
We also consider services, each with a deadline within which all their requirements must be met. Henceforth, we will use the term  services rather than users in recognition of the fact that a user can request more than one service. 
In this paper, we assume  discretized  time and \textquoteleft\textit{one time unit}' refers to the minimum amount  of time during which a transmission can occur. Let the TTIs be  indexed in the time domain by $n\in\field{N}$. The length of each TTI $\Delta(n), \forall n\in\NN$ is scalable and can be selected from a finite set $\Delta(n)\in\left\{1, 2, \ldots, L\right\}$, where $L \in\NN$ is the largest number of  time units  that can be assigned to a particular TTI.
The active set of services at the beginning of the $n$-th TTI is denoted by $\mathcal{S}_{n}$ with cardinality $|\mathcal{S}_{n}|$. 
	  
Let $\mathcal{K}\triangleq\left\{1,2,\ldots,K\right\}\subset\mathbb{N}$ be the set of available channels with cardinality $|\mathcal{K}|$, and assume that the same TTI size is retained for all the channels. Each service $s \in \mathcal{S}_{n}$ can be allocated to a number of channels. We use the vector $\bm{a}_{s}(n) \in \left\{0,1\right\}^{|\mathcal{K}|}$ to denote the allocation of  channels to a service $s$. The $i$-th element of $\bm{a}_{s}(n)$, $a_{i,s}(n)$, takes the value one if the $i$-th channel is assigned to the service $s$ during the $n$-th TTI, and takes the value zero otherwise. Let $\mathcal{NZ}_{s}(n)$ denote the set of non-zero elements of  vector $\bm{a}_{s}(n)$. Let the channel allocation for all services be collected in a binary matrix $\ma{A}(n)\in\{0,1\}^{|\mathcal{K}| \times |\mathcal{S}_n|}$, where the $s$-th column is $\ve{a}_s(n)$. Each channel can be assigned up to one service within a TTI and thus, we have the following constraint
	\begin{equation}
		\sum\limits_{s\in \mathcal{S}_{n}} a_{i,s}(n) \leq 1 , \forall  i \in \mathcal{K}, \, \forall n\in\NN. 
	\end{equation}
Each channel $i$ has a known channel state information (CSI) for every service $s$. The CSI in the $i$-th channel for the $s$-th service in the $n$-th TTI is a tuple defined as 
$$\csi_{i,s}\left(n\right)=\left(R_{i,s}(n),T_{i,s}(n)\right).$$ 
In this tuple, $R_{i,s}$ denotes the transmission rate of the $s$-th service over the $i$-th channel (in bits/one time unit) that can be sustained without errors for $T_{i,s}$ time units, if the $i$-th channel is assigned to $s$. Note that the CSI of a channel still changes from one TTI to another. 
	
At the beginning of the $n$-th TTI, each service $s$ has  a known data requirement  denoted by $Q_{s}(n-1)$. Then, we denote $Q_{s}(n)$   as the  amount of data (in bits) that still needs  to be served at the end of the $n$-th TTI. The evolution of the backlog can be described by
	\begin{equation}
        \hspace*{-3mm}Q_{s}(n)
		\triangleq \left[ Q_{s}(n-1)-\left(\Delta(n)-\delta \right)\sum\limits_{i\in \mathcal{K}}a_{i,s}(n)R_{i,s}(n)\right]^{+}\hspace*{-3mm},
	\end{equation}
where $\left[\cdot \right]^{+}  \triangleq \max\left\{0,\cdot\right\}$ and $\delta$ is the fraction of a time unit required for the transmission of the signaling overhead. We assume that $\delta$ is less than or equal to one time unit. Moreover, each service has a specific deadline before which the data has to be delivered. If a service is not completely served before the deadline, the system fails to meet its requirements and the service is dropped. This deadline is denoted by $D_{s}(n)$, and defined as
	\begin{equation}
	D_{s}(n)\triangleq \left[ D_{s}(n-1)-\Delta(n)\right]^{+}.
	\end{equation}
If $Q_{s}\left( n\right)\neq 0$ and $D_{s}\left( n\right)=0$, the service $s$ is dropped from the system, whereas if $Q_{s}\left(n\right) = 0$ and $D_s(n)\geq 0$, the service $s$ is completely served and exits the system. Additionally, we define the ``emptying rate'', $E_s(n)$, of a service $s$ at the end of the $n$-th TTI by
	\begin{equation}\label{EmptyingRate}
		E_s(n)\triangleq \frac{Q_{s}(n-1)-Q_{s}(n)}{Q_{s}(n-1)},  
	\end{equation}
where $E_s(n)\in \left[0,1\right]$, represents the ratio between the data served within the $n$-th TTI and the amount of data remaining at the end of the $(n-1)$-th TTI. This implies: the larger the emptying rate, the faster the data is served with respect to what was remaining at the end of the previous TTI. For example, if  service $s$ is completely served at the end of the third TTI, then $Q_s(3)=0$ and $E_s(3)=1$; on the other hand, if $s$ is not served at all during the third TTI, then $Q_s(2)=Q_s(3)$ and thus, $E_s(3)=0$.

\section{Problem Formulation}\label{sec:formulation}
	
At the $n$-th TTI, the optimization variables for the TTI length and the channel allocation are $\left\{\Delta(n),\mathbf{A}(n)\right\}$, respectively. Our objective is to address the trade-off between the throughput performance and number of dropped services. To this end, we develop a scheduling scheme  that will be able to either prioritize services with short deadlines, or(/and) services that can be completely served during the current round of scheduling.
	
\subsection{Utility function}
We define our utility function as
	\begin{equation}\label{eq:utility}
		U(n)\triangleq \sum\limits_{s\in\mathcal{S}_{n}} W_{s}(n)E_s(n),
	\end{equation}   
where $E_{s}(n)$ is the emptying rate, and the weight $W_{s}\triangleq\frac{1}{D_s(n-1)}$. Note that $W_s$ increases when the $D_s(n-1)$ decreases, i.e., its value increases if the deadline is soon to expire. Since we consider discrete time, the smallest value $D_s(n-1)$ can attain is one time unit. Therefore, the maximum value of $W_s$ is one and as a result, the maximum value of function $U(n)$ is equal to $|\mathcal{S}_n|$. Hence, the function provides a higher reward when the following  types of services are served: i) those having urgent deadlines; and, ii)  those that can be served with higher emptying rates.
	   
\subsection{Optimization Problem} 
Although the utility $U(n)$ in (\ref{eq:utility}) is designed to prioritize services with urgent deadlines, $U(n)$ alone cannot guarantee that services, which can be completely served during the current round of scheduling are chosen. Therefore, we formulate the optimization problem by augmenting the utility function and by introducing additional constraints, as given below.
\begin{subequations}
		\begin{align}\label{eq:util_opt}
		\underset{\Delta(n), \ma{A}(n)}{\max}  
		\quad &U(n) + \theta(n)\\ \label{eq:minTTI}  
		\text{s. t.}\quad & \Delta(n)\leq \min\limits_{s\in\mathcal{S}_{n}} \,\, \min\limits_{i\in \mathcal{NZ}_{s}(n)}T_{i,s}(n),\\ \label{eq:1servicechannel}
		& \sum\limits_{s\in \mathcal{S}_{n}} a_{i,s}(n) \leq 1,  \forall i \in \mathcal{K},\\ \label{set:TTI}
		& \Delta\left(n\right) \in \left\{1,\ldots, L\right\},\\
		& \ma{A}(n)\in\{0,1\}^{|\mathcal{K}| \times |\mathcal{S}_n|}, \label{eq:const_a}\\ \label{maxTTI}
		& \theta(n)= M\sum\limits_{s\in \mathcal{S}_{n}} \indication{Q_s(n)=0},  
		\end{align}  
	\end{subequations} 
where $M=(|\mathcal{S}_n|-1)$. Moreover, $\mathbb{1}_{\left\{B\right\}}$ is the indicator function which takes the value one if  the event $B$ occurs, and the value zero otherwise. For the rest of this paper, we refer to the problem above as  scalable-TTI enabled channel allocation STCA. The objective function (\ref{eq:util_opt}) is the  sum of the utility function (\ref{eq:utility}) and an additional reward $\theta(n)$. The function $\theta(n)$, defined in (\ref{maxTTI}), is equal to the product of a constant $M$ and the number of completely satisfied services at the end of the current TTI. This, therefore, ensures that the number of completely served services is included in the objective function (\ref{eq:util_opt}). Furthermore, $\theta(n)$ also ensures that if  at least one service is completely served, the value it takes in the corresponding term of the objective function (\ref{eq:util_opt}) is greater than the sum of the other $(|\mathcal{S}_n|-1)$ terms of the objective function. As a result, we prioritize services that can be completely served after the current scheduling instance.       
    
Additionally, constraint (\ref{eq:minTTI}) ensures that the selected TTI size does not violate the minimum TTI size for a given channel and service. Constraint (\ref{eq:1servicechannel}) ensures that a channel can be assigned to up to one service. 
  
\section{Complexity}\label{Sec:Compl}

This section addresses the complexity of the optimization problem. Specifically, we  prove that the optimization problem, as defined in Section \ref{sec:formulation}, is NP-hard. However, as shown later on in Theorem \ref{the2}, the problem admits a polynomial-time algorithm  guaranteeing  optimality,  if flat channels are assumed. By flat channels, we mean that for each service, the channel gains are the same for all channels within a given TTI.
	
\begin{thm}\label{Th:NP}
	STCA is NP-hard. 
\end{thm}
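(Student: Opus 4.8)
The plan is to prove NP-hardness by a polynomial-time reduction from the partition problem (PP), which is NP-complete: given positive integers $b_1,\dots,b_K$ with $\sum_{i=1}^{K} b_i = 2B'$, decide whether some subset of them sums to exactly $B'$. I would argue on the decision version of STCA — given an instance together with a target value $v$, is the optimum of (\ref{eq:util_opt}) at least $v$? — and show that a PP instance is a yes-instance precisely when the associated STCA instance has optimum at least $v$. Since STCA is a per-TTI optimization in the variables $\{\Delta(n),\ma{A}(n)\}$, it suffices to build a single-TTI instance.

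For the construction, take one TTI with exactly $|\mathcal{S}_n| = 2$ services and $|\mathcal{K}| = K$ channels, one channel per number $b_i$. Fix the signaling overhead $\delta \in [0,1)$ (the non-degenerate regime of the model), and set $T_{i,s}(n) = 1$ and $R_{i,s}(n) = b_i$ for every channel $i$ and both services $s$; take $L = 1$; set $D_1(n-1) = D_2(n-1) = 1$, so that $W_1 = W_2 = 1$; and set $Q_1(n-1) = Q_2(n-1) = (1-\delta)B'$ (rescaling the $b_i$ to clear denominators if integer data sizes are insisted upon). Constraints (\ref{eq:minTTI}) and (\ref{set:TTI}) then force $\Delta(n) = 1$ as soon as any channel is used, so the only effective decision is the channel assignment $\ma{A}(n)$, i.e., a split of (a subset of) the channels into disjoint sets $\mathcal{K}_1,\mathcal{K}_2$ handed to the two services. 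By the recursion defining $Q_s(n)$, service $s$ is completely served, $Q_s(n)=0$, exactly when $(1-\delta)\sum_{i\in\mathcal{K}_s} b_i \geq (1-\delta)B'$, i.e., $\sum_{i\in\mathcal{K}_s} b_i \geq B'$.

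Here $M = |\mathcal{S}_n|-1 = 1$, $U(n) = E_1(n)+E_2(n)\leq 2$ always, and $\theta(n) = \sum_{s}\indication{Q_s(n)=0}\in\{0,1,2\}$, so the objective never exceeds $v := 4$, and it equals $4$ if and only if $\theta(n)=2$, i.e., if and only if \emph{both} services are completely served. Because $\mathcal{K}_1\cap\mathcal{K}_2 = \emptyset$ and $\sum_i b_i = 2B'$, the two inequalities $\sum_{i\in\mathcal{K}_1} b_i \geq B'$ and $\sum_{i\in\mathcal{K}_2} b_i \geq B'$ can hold together only if both are equalities and every channel is used — that is, only if $\{b_i\}$ admits a partition into two halves each summing to $B'$; conversely any such partition yields a feasible $\ma{A}(n)$ attaining objective $4$. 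Hence the STCA optimum is at least $4$ iff the PP instance is a yes-instance, and since the construction has size polynomial in the PP instance, STCA is NP-hard.

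The step I expect to be the main obstacle is making the reduction airtight against the ``slack'' in the model: one must verify that enlarging $\Delta(n)$ (ruled out by $T_{i,s}=1$ together with $L=1$), leaving channels unassigned, or merely partially serving a service can never substitute for an exact partition in reaching the target — which is why the target is pinned to the extreme value $4$ and why $\theta(n)$, with the coefficient $M = |\mathcal{S}_n|-1$ chosen precisely to dominate $U(n)$ (the property noted after (\ref{maxTTI})), is the right quantity to drive the equivalence. Selecting the source problem and the clean two-service embedding so that the channel-assignment subproblem literally becomes PP is the other point requiring care.
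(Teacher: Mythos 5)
Your proposal is correct and follows essentially the same route as the paper: a reduction from the Partition Problem to a single-TTI, two-service instance with one channel per integer, unit deadlines, target value $4$, and the observation that the objective reaches $4$ exactly when both services are completely served, which forces an exact partition. The only cosmetic differences are that the paper fixes $\delta=0$ where you keep a general $\delta$ and rescale the backlogs, and that you make explicit (via $T_{i,s}=1$ and $L=1$) why $\Delta(n)=1$ is forced.
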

\begin{proof}
We prove that the decision version of the STCA  problem is NP-complete by a polynomial-time reduction to and from the Partition Problem (PP) in three steps, \cite{Computers}. The decision version of the STCA problem can be stated as:

\textit{ Given a set of services $\mathcal{S}_n$, the backlogs $Q_s(n-1)$,  the deadlines $D_s(n-1)$,  a set of channels $\mathcal{K}$, and the achievable rates $R_{i,s}(n)$, $\forall i\in \mathcal{K}$ and $\forall s\in\mathcal{S}_n$, is there a solution of the given STCA instance such that the  value of the objective function is at least $f$, where $f$ is a given positive number?}
		
\noindent \textit{\underline{Step 1}}: We prove that the STCA problem belongs to the NP class of problems, i.e. given an STCA instance, a positive answer and its associated solution, it takes polynomial time to verify whether the answer to the question posed is indeed YES. It is a plain to see that, given a solution,  computing $U(n) + \theta(n)$ takes polynomial time. Therefore, STCA is in the NP class of problems. 
		
\noindent \textit{\underline{Step 2}}: We now show that there is a polynomial-time reduction from the PP to the STCA problem. In the PP, for a set of positive integers $\left\{p_1,\ldots,p_m\right\}$, the task is to determine whether or not this set can be partitioned into two subsets of equal sums, i.e. $\sum\limits_{i\in \mathcal{A'}} p_{i}=\sum\limits_{i \in \mathcal{A\setminus A'}} p_{i}$, where $\mathcal{A}=\left\{1,\ldots,m\right\}$ and $\mathcal{A'} \subset \mathcal{A}$. Without loss of generality, we can assume that $\sum\limits_{i\in \mathcal{A}} p_i$ is even. Then, given an instance of the PP, we can define an instance of the STCA problem as follows:\vspace*{-1mm}
	\begin{itemize}
		\item $\mathcal{S}_{n}=\left\{1,2\right\}, \Longrightarrow |\mathcal{S}_{n}|=2$. $|\mathcal{K}|=|\mathcal{A}|$.
		\item $D_{s}(n-1)=1$ time unit, $\forall s\in \mathcal{S}_{n}$.
		\item $\Delta(n)=1$ time unit.
		\item $\delta=0$. $R_{i,s}(n)=p_i$, $\forall s \in \mathcal{S}_n, \forall i \in \mathcal{A}$.
		\item $Q_s(n)=\frac{1}{2}\sum\limits_{i \in \mathcal{A}} p_i$, $\forall s \in \mathcal{S}_n$.
	\end{itemize} 
Based on the instance defined above, the value of $f$ in the decision version of this STCA instance is set to $4$, i.e., $f=4$. From the assignments above, there is a one-to-one mapping between the elements in the PP and the channels in the STCA problem. In particular, we associate the $i$-th element in $\mathcal{A}$ with the $i$-th element in $\mathcal{K}$. Therefore, the above definition clearly represents  a polynomial-time reduction.
		     
\noindent
\textit{\underline{Step 3}}: We now prove that the PP instance has the answer YES if and only if the answer to the defined STCA decision  instance is YES. If the answer to the PP instance is YES, there are two sets $\mathcal{A'}$ and $\mathcal{A\setminus A'}$, such that $\sum\limits_{i\in \mathcal{A'}} p_i=\sum\limits_{i\in \mathcal{A\setminus A'}} p_i= \frac{1}{2} \sum\limits_{i \in \mathcal{A}} p_i$. We assign the channels corresponding to the set $\mathcal{A'}$ to  one service, and the channels corresponding to the set $\mathcal{A\setminus A'}$ to  the other. Hence, for the STCA instance, we have $\sum\limits_{i\in\mathcal{A'}} R_{i,1}=\sum\limits_{i\in\mathcal{A\setminus A'}} R_{i,2}=\frac{1}{2} \sum\limits_{i \in \mathcal{A}} p_i$. Since $Q_{s}(n)=\frac{1}{2} \sum\limits_{i\in \mathcal{A}} p_i \text{, }\forall s \in \mathcal{S}_{n}$, both services are completely served and therefore, $f=4$. Hence, the instance above is a YES instance of the defined STCA decision problem.
		 
Conversely, if the answer to the defined STCA decision instance is YES, there are two sets $\mathcal{K'}$ and $\mathcal{K\setminus K'}$ which correspond to the channel assignments for the services one and two, respectively. Since the answer is YES, there is a solution such that the value of the objective function is equal to $4$.  Note that this value can be reached if and only if both  services are completely served. Hence, we have	\begin{align}\label{rate1:geq}
		\sum\limits_{i \in \mathcal{K}'} R_{i,1}(n)& \geq \frac{1}{2} \sum\limits_{i \in \mathcal{A}}p_i, \\ \label{rate2:geq}
		\sum\limits_{i \in \mathcal{K} \setminus \mathcal{K}'}\hspace*{-2mm} R_{i,2}(n)& \geq \frac{1}{2} \sum\limits_{i \in \mathcal{A}}p_i.
	\end{align}
We also have, by definition, that $\sum\limits_{i \in \mathcal{K}} R_{i,s}(n) = \sum\limits_{i \in \mathcal{A}} p_i,$ for $s \in \{1,2\}$, and $R_{i,1}(n)=R_{i,2}(n)=p_{i}\text{, } \forall i$. Therefore, the conditions (\ref{rate1:geq}) and (\ref{rate2:geq}) hold if and only if they are equal. Hence, $\sum\limits_{i \in \mathcal{K'}} p_i= \sum\limits_{i \in \mathcal{K\setminus K'}} p_i =\frac{1}{2} \sum\limits_{i \in \mathcal{K}} p_i$, and $\{\mathcal{K},\mathcal{K\setminus K'}\}$ is a feasible partition. This establishes the NP-completeness of the decision version of the STCA problem. Therefore, the STCA problem is NP-hard. 
\end{proof}     
	 
This leads us to the proof that the global optimum of \textit{STCA} can be computed in polynomial time for the special case of flat channels.
\begin{thm}\label{the2}
	The global optimum of \textit{STCA} can be computed in polynomial time for flat channels.
\end{thm}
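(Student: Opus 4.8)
The plan is to exploit the fact that under flat channels a channel allocation $\ma{A}(n)$ matters only through how many channels each service receives, not through which ones. Writing $R_{i,s}(n)=R_s(n)$ and $T_{i,s}(n)=T_s(n)$ for all $i\in\mathcal{K}$, and letting $k_s\in\{0,1,\dots,|\mathcal{K}|\}$ be the number of channels assigned to service $s$, constraint (\ref{eq:1servicechannel}) becomes $\sum_{s\in\mathcal{S}_n}k_s\le|\mathcal{K}|$ and constraint (\ref{eq:minTTI}) becomes $\Delta(n)\le\min_{s\,:\,k_s\ge 1}T_s(n)$. Hence a feasible solution is equivalently a pair $\big(\Delta,(k_s)_{s\in\mathcal{S}_n}\big)$, and STCA turns into an integer allocation problem.

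First I would enumerate the TTI length: since $\Delta(n)\in\{1,\dots,L\}$ there are only $L$ candidates, so it suffices to solve the residual allocation problem optimally for each fixed $\Delta$ and return the best. Fix $\Delta$. Call service $s$ eligible iff $T_s(n)\ge\Delta$ (otherwise it is forced to $k_s=0$), and for an eligible service let $k_s^{\star}(\Delta)=\big\lceil Q_s(n-1)\big/\big((\Delta-\delta)R_s(n)\big)\big\rceil$ be the least number of channels that clears its backlog. Then the contribution of service $s$ to the objective (\ref{eq:util_opt}) is an explicit function of $k_s$: it is $0$ for $k_s=0$; it equals the linear term $W_s(n)\,(\Delta-\delta)k_sR_s(n)/Q_s(n-1)$ for $1\le k_s<k_s^{\star}(\Delta)$; and it equals $W_s(n)+M$ for every $k_s\ge k_s^{\star}(\Delta)$, because once $Q_s(n)=0$ we have $E_s(n)=1$ and the indicator inside $\theta(n)$ fires. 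In particular, giving any service more than $k_s^{\star}(\Delta)$ channels is never useful, so we may restrict to $0\le k_s\le k_s^{\star}(\Delta)$.

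The residual problem for fixed $\Delta$ is thus: maximise the sum of those per-service values over integers $0\le k_s\le k_s^{\star}(\Delta)$ subject to $\sum_s k_s\le|\mathcal{K}|$. I would solve this by dynamic programming over the number of channels consumed: order the eligible services and let $V_t(c)$ be the maximum total value from the first $t$ of them using exactly $c$ channels, with transition $V_t(c)=\max_{0\le j\le\min\{c,\,k_t^{\star}(\Delta)\}}\big(V_{t-1}(c-j)+v_t(j)\big)$, where $v_t(j)$ is the closed form above; the optimum for this $\Delta$ is $\max_{0\le c\le|\mathcal{K}|}V_{|\mathcal{S}_n|}(c)$, and a maximiser is recovered by back-tracking. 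Since $c$ and $j$ each range over at most $|\mathcal{K}|+1$ values, this costs $O(|\mathcal{S}_n|\,|\mathcal{K}|^2)$ per $\Delta$, hence $O\big(L\,|\mathcal{S}_n|\,|\mathcal{K}|^2\big)$ in total, which is polynomial in the instance size because $L$, $|\mathcal{K}|$ and $|\mathcal{S}_n|$ are all given as part of the input. Correctness is then immediate: the enumeration over $\Delta$ is exhaustive, and for each $\Delta$ the DP solves the bounded integer allocation exactly, which by the first paragraph equals the optimum of STCA restricted to that $\Delta$.

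The step needing the most care is the reduction to counts together with the per-service value function. One must verify the boundary cases ($\Delta-\delta=0$, services with $R_s(n)=0$ or with $Q_s(n-1)=0$ already, and services with $T_s(n)<\Delta$), check that the upward jump contributed by $\theta(n)$ is placed correctly at $k_s=k_s^{\star}(\Delta)$, and note that although $v_s(\cdot)$ is non-concave (a linear ramp followed by a jump) — which would defeat a naive greedy — this poses no difficulty for the DP. Finally, one should make explicit that the running time is genuinely polynomial rather than merely pseudo-polynomial: the ``knapsack capacity'' here is the channel count $|\mathcal{K}|$, which is itself part of the combinatorial description of the instance, not a numeric parameter that could be exponentially large.
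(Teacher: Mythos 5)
Your proposal is correct and follows essentially the same route as the paper: under flat channels only the number of channels per service matters, so the optimum is found by a dynamic program over (service, channel count) with cost $\mathcal{O}(|\mathcal{S}_n||\mathcal{K}|^2)$ per TTI length. You are in fact somewhat more careful than the paper, which leaves the enumeration over $\Delta$ and the exact per-service value function implicit.
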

\begin{proof}
	If we have $K$ flat channels, then  $\csi_{k,s_i} = \csi_{l,s_j}$, for all channels $k$ and  $l$, and for all services $s_i$ and $s_j$. Let $g_k^s$ denote the value of the objective function when $k$ channels are allocated to service $s$, i.e.
	\begin{equation}  
		g_{k}^{s}=
		\begin{cases}
		W_{s}(n)+M, & \text{if } Q_s(n)=0 \equiv E_{s}(n)=1,\\
		W_{s}(n) E_s(n), & \text{otherwise}.	 
		\end{cases}
	\end{equation}
	Moreover, if there is no channel assigned to the service $s$, then $g_{0}^{s}=0$. Let $h_s(i)$ denote the objective function value of optimally allocating $i$ channels to services $\left\{1, \ldots, s\right\}$. The optimal objective value can be computed by the  recursive function
	\begin{equation}\label{eq:h_sk}
		h_{s}(k)=\max\limits_{k=0,1,\ldots,K}\left\{g_{k}^{s}+h_{s-1}(K-k)\right\}\text{.}
	\end{equation}
	We then construct a $|\mathcal{S}_{n}|\times K$ matrix whose elements are computed using (\ref{eq:h_sk}). The $(s,k)$-th element of the matrix includes the optimal value of the objective function for services $\left\{1, \ldots, s\right\}$ using $k$ channels. Hence, the $(|\mathcal{S}_{n}|,K)$-th element gives the value of the optimum solution of the entire optimization problem.
		
	For the first row of the matrix, computing the entries $h_{1}\left(1\right),\ldots,h_{1}\left(k\right)$ in the given order are straightforward, and each entry requires a computational complexity of $\mathcal{O}\left(1\right)$. Each element of the $s$-th row requires $\sum\limits_{i=1}^{K} i = K\left(K+1\right)/2$ computations. Hence, the computational complexity that is required for each row is $\mathcal{O}\left(K^{2}\right)$ and thus, the total computational complexity is $\mathcal{O}\left(|\mathcal{S}_{n}|K^{2}\right)$. Therefore, the optimum solution of the STCA problem, in the case of flat channels, can be computed using dynamic programming in polynomial time.
	\end{proof}	
	 
\section{Integer Linear  Programing Formulation}\label{sec:IP} 
 
In this section, we develop an Integer Linear Program (ILP) in order to compute the optimal solution of the STCA problem, which enables a more detailed study of the performance of scalable TTI. First, we solve the problem in (\ref{eq:util_opt}) with a fixed TTI length as an input. Note that the problem is solved for each viable TTI length separately. Then, we compare the value of the objective function for all the TTI lengths considered, and  subsequently select the TTI length and the channel assignment for which the objective function is maximized. The pair $\left\{\Delta(n),\bm{A}(n)\right\}$ for which the objective function in (\ref{eq:util_opt}) is maximized is the optimal solution. It should be noted that, for each possible TTI length, if the TTI length is greater than a given service's deadline, we remove the corresponding service from the optimization problem; thereby, considering the service dropped. In other words, the services whose deadlines will expire despite choosing the optimal $\Delta$ (denoted by $\Delta'$) have a utility equal to zero. Thus, for each fixed $\Delta'$, we consider the set of services $\{s\in\mathcal{S}_n:D_s(n-1)\geq \Delta'\}$.
	
In this section, we omit the index $n$ for notational brevity and redefine some of the parameters as follows:
	\begin{itemize} 
		\item $Q'_{s}$ -- the data backlog of $s$ during the current TTI.
	    \item $W'_{s}=\frac{W_{s}}{Q'_s}$.
		\item $\beta_{s}$ -- amount of data served to the service $s$ at the  end of the current TTI.
		\item $R'_{i,s} = (\Delta - \delta) R_{i,s}$ is the amount of data that could be transmitted to service $s$, if the channel $i$ is assigned to it. 
		\item $Y_s=
		         \begin{cases}
		                   1\text{, if the service } s \text{ is completely served,}\\
		                   0\text{, otherwise.}
		         \end{cases}$
		 \item $D_s'$-- the deadline of service $s$ after the $(n-1)$-th TTI.
		 \item $\mathcal{S}_{\Delta'}=\{s\in\mathcal{S}_{n}:D_s'\geq \Delta'\}$.
	\end{itemize}
The rest of the notations remain unchanged. The optimization problem can then be formulated  as the following ILP for a given $W_{s} \in \RN$ and  $\Delta'$.
	\begin{subequations}
		\begin{eqnarray}\label{obj:integer}
		&{\max\limits_{\mathbf{A}}}  \sum\limits_{s \in \mathcal{S}_{\Delta'}}W'_s\beta_{s} + M\sum\limits_{s\in\mathcal{S}_{\Delta'}}Y_s \\[5pt]  \label{eqn:cnst_TTI}
		\text{s.~t.} &\Delta' - T_{i,s}\leq J_1(1-a_{i,s}), \forall i\in\mathcal{K}, \forall s\in\mathcal{S}_{\Delta'}, \label{eqn:cnst_TTI}\\[5pt] \label{chanassign:integer}
		& \sum\limits_{s\in \mathcal{S}_{\Delta'}}a_{i,s}\leq 1,  \forall i\in \mathcal{K},\\[5pt] \label{consb:serveddata}
		& \beta_s\leq \sum\limits_{i \in \mathcal{K}} R'_{i,s}a_{i,s} \label{consb:upperdata},  \forall s \in \mathcal{S}_{\Delta'},  \\[5pt] 
        &\label{const:Ys} Y_{s}\leq \frac{\beta_s}{Q'_s}\leq 1, \forall s\in \mathcal{S}_{\Delta'},
		\end{eqnarray}  
	\end{subequations}
where the constant $J_1\gg L$ in \eqref{eqn:cnst_TTI} guarantees that $a_{i,s} = 0$ if $T_{i,s}<\Delta'$. The constraint (\ref{chanassign:integer}) ensures that each channel is assigned  up to one service and (\ref{consb:upperdata}) makes sure that the maximum value $\beta_s$ can attain is the amount of data remaining  for service $s$. Therefore, if the service $s$ is completely served, the corresponding term in (\ref{obj:integer}) takes the maximum value, which is equal to $W_s$. Note that the ratio $\frac{\beta_{s}}{Q'_s}$ in (\ref{const:Ys}) represents the emptying rate in (\ref{EmptyingRate}). Additionally, if $s$ is completely served, constraint (\ref{const:Ys}) ensures that $Y_s$ is assigned a value equal to one.

\section{Algorithm}
\setlength{\textfloatsep}{0pt}
\begin{algorithm}[!h]
   	\scriptsize
   	\caption{CAST algorithm}
   	$G_{\text{max}}\leftarrow - \infty$, $W_{s}=\frac{1}{D_{s}(n-1)}\text{, }\forall s \in\mathcal{S}$\\
   	\For{$\Delta'=1:L$}{
   	      $\mathbf{A}'\leftarrow \mathbf{0}_{K\times |\mathcal{S}|}\text{, }\mathcal{S}'\leftarrow\mathcal{S}\text{, } Q'_{s}\leftarrow Q_{s}$\\
           \If{$D_{s}(n-1)-\Delta'<0$}{
              $\mathcal{S}'\leftarrow\mathcal{S}\setminus\{s\}$    
           }
           \For{$i\in \mathcal{K}$}{
           	$g_{\text{max}}\leftarrow -\infty$\\
           	     \For{$s\in \mathcal{S}'$}{
           	     	    \eIf{$\Delta'\leq T_{i,s}$}{
           	     	          $Q_{\text{temp}}\leftarrow \left[Q'_s-(\Delta'-\delta)R_{i,s}\right]^{+}$ \\[2pt]
           	     	          $E'_{s}\leftarrow \frac{Q_{s}(n-1)-Q_{\text{temp}}}{Q_{s}(n-1)}$\\[2pt]
           	     	          $g\leftarrow W_{s}E_{s}'+M\mathbb{1}_{\left\{Q_{\text{temp}=0}\right\}}$\\
           	     	          \If{$g>g_{\text{max}}$}{
           	     	        	  $s_{\text{max}}\leftarrow s$, $g_{\text{max}}\leftarrow g$\\
           	     	       	      $Q_{s_\text{max}}\leftarrow Q_{\text{temp}}$} 
           	     	              \If{$Q_{s_\text{max}}= 0$}{
           	     	    	          $\mathcal{S'}\leftarrow \mathcal{S}\setminus \left\{s_{\text{max}}\right\}$
                	                  }
                        }
                        {$A'_{i,s}\leftarrow 0$}
           	     }
             $G\leftarrow G+g_{\text{max}}$, $A'_{i,s_\text{max}}\leftarrow 1$\\
           }
        \If{$G>G_{\text{max}}$}{
       	$\mathbf{A}_{\text{max}}\leftarrow \bf{A}'$\\
       	$\Delta_{\text{max}}\leftarrow \Delta'$
       } 
    }
	$\mathbf{A}(n)\leftarrow \mathbf{A}_{\text{max}}, \Delta(n)\leftarrow \Delta_{\text{max}}$
   \end{algorithm}

In order to have a polynomial time scheduling algorithm, we propose a heuristic  called  channel allocation with scalable TTI (CAST) algorithm.
For each channel $i \in \mathcal{K}$, the CAST algorithm finds the service $s \in \mathcal{S}_{n}$, which has the maximum corresponding value of the objective function (\ref{eq:util_opt}) -- should the channel $i$ be assigned to service $s$. The algorithm calculates the objective function for each possible TTI length, and selects the channel assignment and the TTI length for which the objective function is maximized.

The CAST algorithm decides the channel assignment for each  TTI length in two  steps. During the first step, the algorithm excludes the services whose deadlines cannot be met (lines 4 -- 5). The variable $g$, whose value is calculated in lines 9 -- 12, is the objective function value, if the channel $i$ is assigned to the service $s$. Note that a channel $i$ can be assigned to service $s$ only if the TTI length $\Delta'$ is less than the duration $T_{i,s}$ within which an error-free computation of the rate is possible (cf. line 9). During the second step, the algorithm allocates each channel to a corresponding service with the maximum value of the objective function (cf. lines 14 -- 15) and removes the service if it is completely served (lines 16 -- 17). The algorithm then compares the value of the objective function for each possible TTI length and selects the channel assignment as well as the TTI length maximizing the value of the objective function (lines 21 -- 24). Based on the description of ILP above, the complexity of the CAST algorithm is found to be  $\mathcal{O}\left(|\mathcal{K}||\mathcal{S}_{n}|L\right)$.

\section{Numerical Results} \label{sec:NumRes} 

In this section, we compare the performance of the CAST algorithm with the optimal solution (OS) for the STCA problem. Additionally, we also compare our approach with a simpler version of  the shortest deadline first scheduler (SDFS)  proposed by the authors in \cite{FlexiblePedersen1}. The above mentioned comparisons are undertaken using the simulations based on the parameters that follow. 

We consider  one time unit to be equal to $0.1$ms, and the TTI length can be selected from a finite set $\Delta(n) \in \left\{ 0.2\text{ms}, 0.3\text{ms}, \ldots, 1\text{ms}\right\}$ in a single cell scenario where the FDD is in downlink mode \footnote{\scriptsize Note that $\Delta(n)$ here is presented with the units 'milliseconds' for improved readability. The value of $\Delta(n)$ in milliseconds is obtained by multiplying the original $\Delta(n)$ with the duration of one time unit ($0.1$ms).}. We also assume that the transmission of control signaling  requires $\delta=0.05$ms  per TTI (regardless of the length of the TTI chosen). We consider a system with an $8$ MHz bandwidth that works on a frequency selective channel with a coherence bandwidth of $0.5$ MHz. The achievable rate for a service $s$ in the $i$-th channel during the $n$-th TTI is computed using the Shannon formula and is given by
$R_{i,s}(n)=B\log_{2}\left(1+|h_{i,s}(n)|^2\frac{S}{N}\right)$,
where  the channel gains $h_{i,s}(n)$ are distributed as a zero-mean complex Gaussian with variance $\sigma^2$, i.e., $h_{i,s}(n)\sim \mathcal{CN}\left(0,\sigma^2\right)$, $S$ is the transmit power, $N$ is the noise power, and $B$ is the  bandwidth of each channel, i.e., $B=0.5$ MHz. The average value of the signal-to-noise ratio (SNR) is equal to $5$ dB. Moreover, we consider that the base station caters to services generated by three MCC sources and one MBB source. Each source generates services per time unit ($0.1$ms) according to a Bernoulli  distribution with probability $r_{\text{MCC}}$ and $r_{\text{MBB}}$ for  MCC sources and the MBB source, respectively. Lastly, each MCC service has a demand of $125$ bytes and deadline of $1$ms, and each MBB service has a demand of $1125$ bytes and a deadline of $10$ms. In the following paragraphs, we study the behavior of the algorithms proposed for various values of $r_{\text{MCC}}$, while the probability of MBB service arrivals is constant and equal to 0.2, i.e., $r_{\text{MBB}} = 0.2$.

\begin{figure}[!htb]
		\vspace*{-2mm}
        \centering
		\includegraphics[scale=0.5]{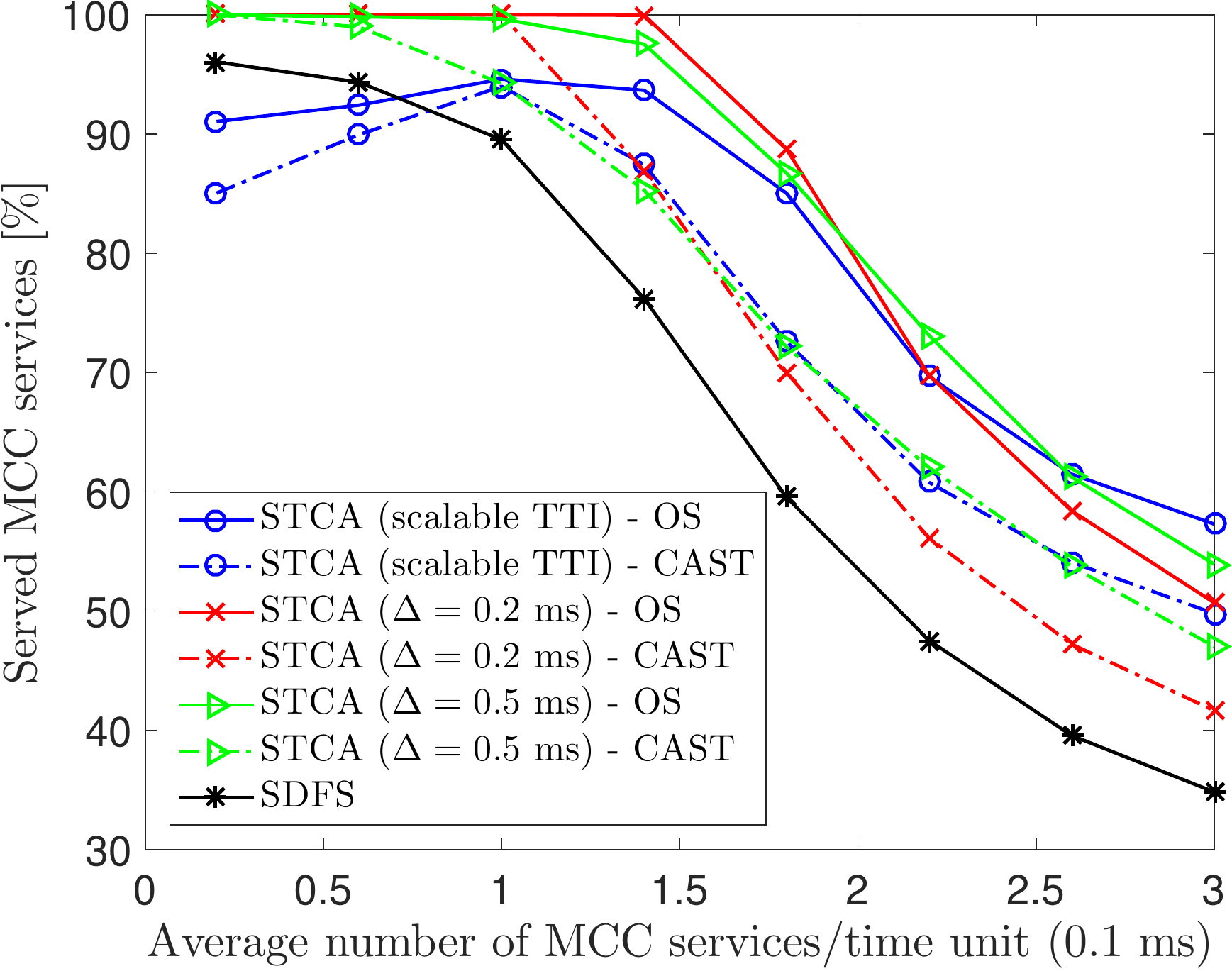}
		\caption{Variations in MCC services.} 
		\label{ServedMCC}
		\vspace*{-5mm}
\end{figure} 
\begin{figure}[!htb]
		\centering
		\includegraphics[scale=0.5]{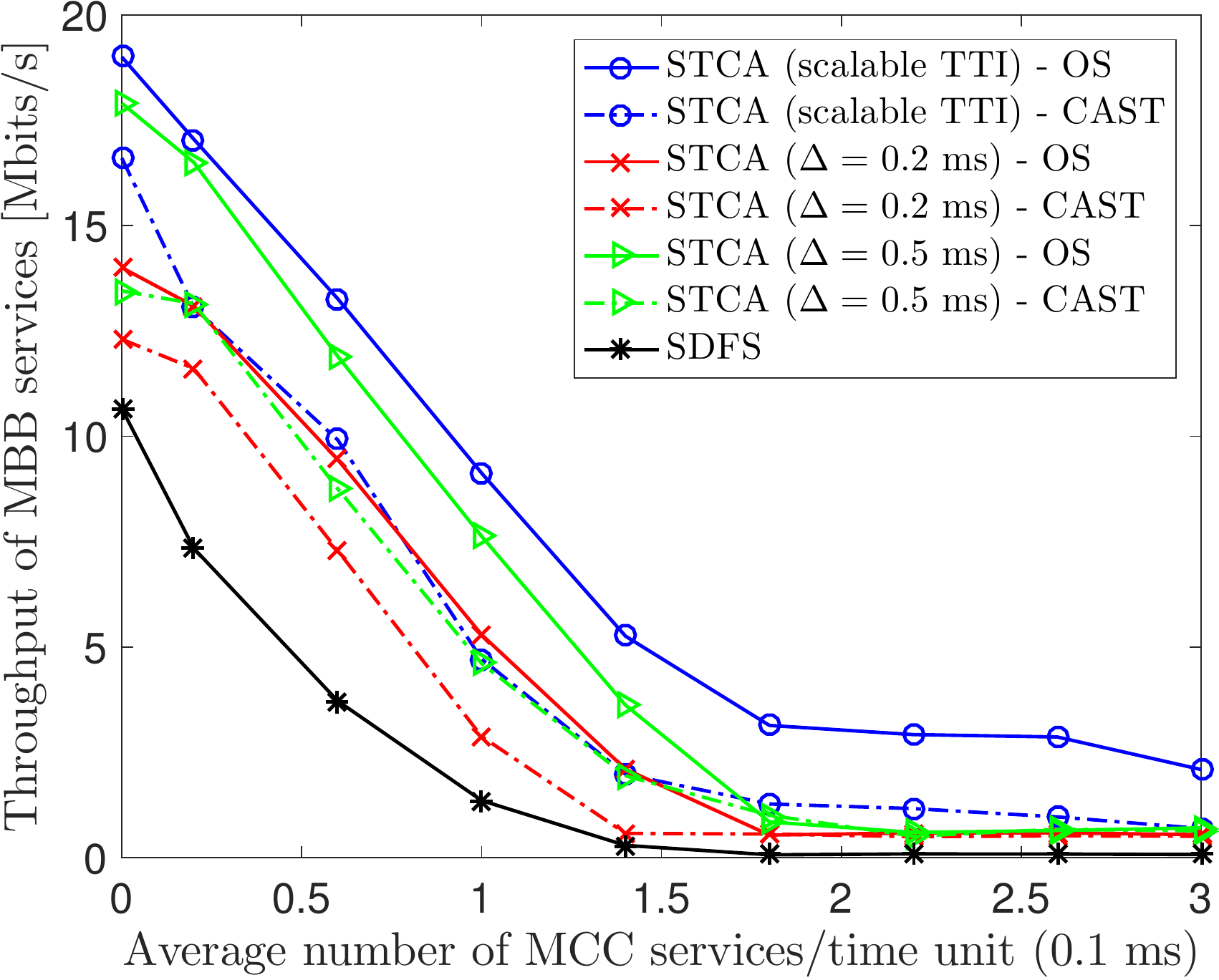} 
		\caption{Variations in the throughputs of MBB services.}
		\label{ThrMBB}
		\vspace*{-2mm}
\end{figure} 
\begin{figure}[!htb]
		\centering
		\includegraphics[scale=0.5]{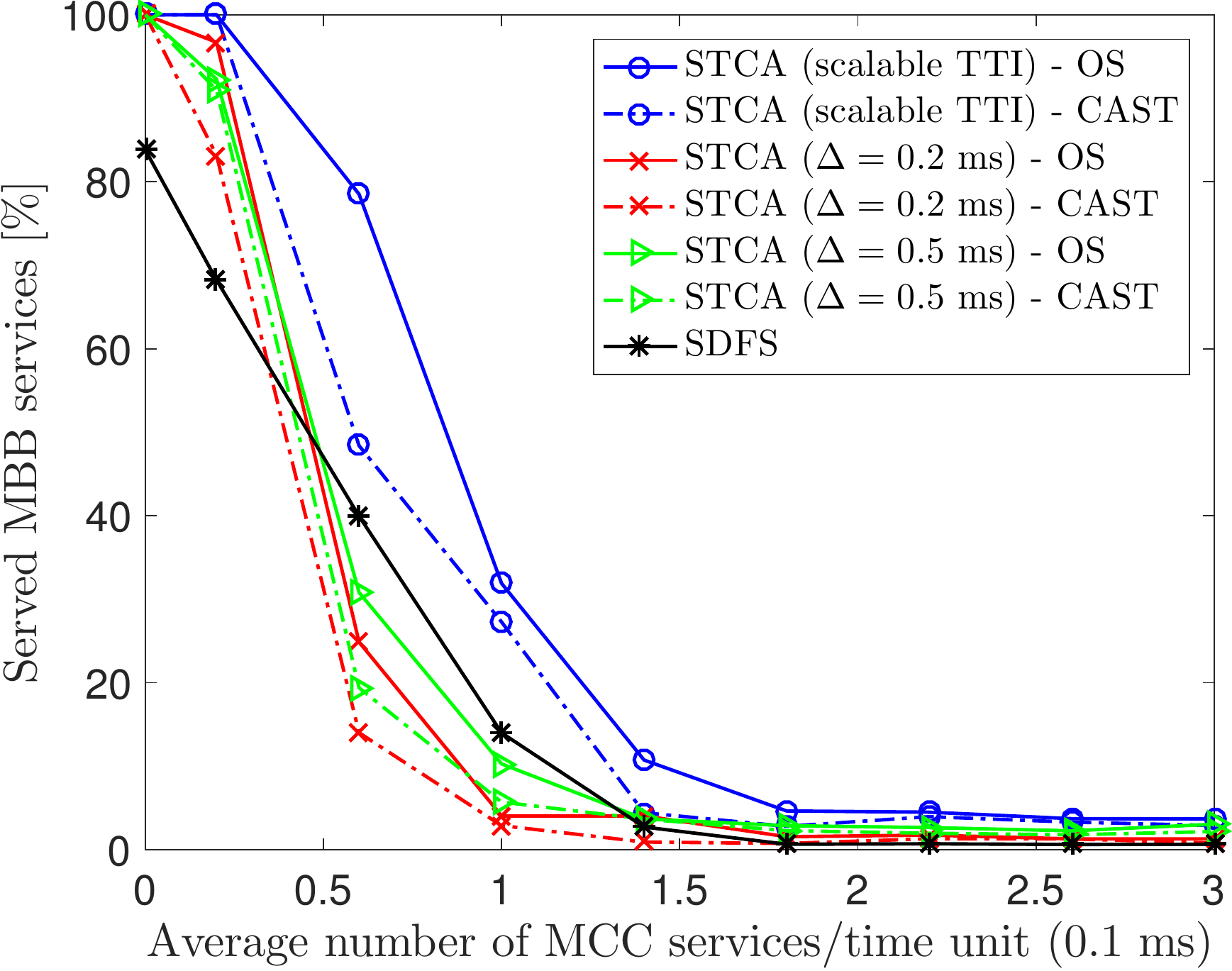}
		\caption{Variations in MBB services.}
		\label{ServedMBB}
\end{figure} 

Fig.~\ref{ServedMCC} depicts the variations in the percentage of MCC services dealt with as the average number of MCC service requests per time unit ($0.1$ms) increases. It documents the aforementioned variations for both the optimal solution and the heuristic of the STCA in scenarios where the TTI lengths are scalable and fixed, as well as the variations seen in the behavior of the SDFS. This figure indicates that a scheduler using the STCA with short but fixed TTI lengths outperforms the one using the STCA with scalable TTI as well as the SDFS. The reason why the STCA with short, fixed TTI outperforms the STCA with scalable TTI is because the latter tends to select longer TTI lengths in order to be able to \textit{completely} serve as many services as possible during each scheduling period. This sort of selection implies that a greater portion of the MCC services end up being dropped. However, as the arrival rate of MCC services continues to increase, the STCA with scalable TTI starts to select shorter TTI lengths; thereby, resulting in the increase in the percentage of MCC services catered to between $0.2$ and $1$ MCC arrivals$/ 0.1$ms before eventually decreasing beyond $1.5$ MCC services$/0.1$ms. It is noteworthy that the STCA with scalable TTI eventually outperforms the STCA with fixed TTI, i.e., beyond $2.5$ MCC services$/0.1$ms. 
 
As commonly known, the amount of signaling overhead increases quite substantially when shorter TTI lengths are selected. The cost of an increase in the signaling overhead is born a decrease in the throughput delivered to the MBB services. Fig.~\ref{ThrMBB} demonstrates the variations in the throughput of the MBB services as the average number of MCC service requests$/0.1$ms increases. Clearly, of the methods considered, the SDFS is the one that is most significantly affected. This figure also indicates that, though the MBB services see an inevitable drop in their throughput, the STCA with scalable TTI is able to cope much better than the STCA with short, fixed TTI -- especially when the average number of MCC service requests$/0.1$ms is greater than $1.5$. A reason why the STCA with scalable TTI outperforms the STCA with short, fixed TTI is because of its ability to contain (and regulate) the amount time spent in transmitting the control signaling more effectively. 
 
Lastly, Fig.~\ref{ServedMBB} -- as in Fig.~\ref{ThrMBB} -- depicts the unavoidable decrease in the percentage of MBB services satisfied when the average number of MCC service requests$/0.1$ms increases. It does, however, highlight the fact that the STCA with scalable TTI is able to serve a far greater percentage of MBB services when compared to the others in the face of increasing MCC service requests$/0.1$ms. This behavior can, once again, be attributed to the fact the STCA with scalable TTI can control the fraction of time spent transmitting the control signaling by periodically choosing larger TTI lengths and thereby, ensuring that MBB services are also furnished with the resources they need. Also, the results illustrate that there is a visible gap between the performance of the CAST algorithm and the OS, though the CAST algorithm significantly outperforms the SDFS. This gap is expected because of the low complexity of the CAST algorithm.

Overall, when one considers all the results collectively, it can be said that a scheduler which jointly considers scalable TTI and channel allocation into account is better at being able to handle traffic heterogeneity and has the ability to improve the spectral efficiency of individual service types.
 
\section{Conclusions}
In this paper, at each scheduling time, we propose a joint optimization of the TTI lengths and the channel allocation depending on the traffic type. The joint optimization problem formulated is then proven to be NP-hard due to which we provide a heuristic akin to a greedy algorithm. However, for flat channels, we also demonstrate that the problem admits a polynomial-time solution that guarantees optimality. 
The optimization problem and its heuristic are then compared not only with one another for the cases of fixed and scalable TTI lengths, but also with the shortest deadline first scheduler. These evaluations illustrate that our proposal of
a joint optimization of TTI lengths and channel allocation is better equipped to handle traffic heterogeneity and provide improved spectral efficiency, due to its ability to regulate the amount of time spent on control signal transmissions and maximize the number of services satisfied. 
  
\section{Acknowledgment}
The authors would like to thank Dr. Ilaria Malanchini for numerous fruitful discussions and her valuable suggestions.
This  work has been supported by the European
Union's Horizon 2020 research and innovation programme under
the Marie Sk\l{}odowska-Curie grant agreement No. $643002$.

\bibliographystyle{IEEEtran}
\bibliography{MyBib}


\end{document}